\newtheorem{lemma}{Lemma}
\newtheorem{fact}{Fact}
\theoremstyle{definition}
\newtheorem{definition}{Definition}
\title{Formalising Sensor Topologies for Target Counting\footnote{This work is supported by the Engineering and Physical Sciences Research Council, under grant EP/N007565/1 (S4: Science of Sensor Systems Software).}
}
\author{
 Sven Linker
\institute{University of Liverpool\\ Department of Computer Science, Liverpool, UK}
\email{s.linker@liverpool.ac.uk}
\and 
Michele Sevegnani
\institute{University of Glasgow\\
School of Computing Science, Glasgow, UK}
\email{michele.sevegnani@glasgow.ac.uk} 
}
\begin{document}
\maketitle

\tikzset{%
target counting/.style={%
  target/.style = { star, star points=5, star point ratio=2.25, draw, fill=blue!20, inner sep=+0pt,  minimum size=3mm},
  sensor/.style = {draw, circle, fill=black, minimum size = 2mm, inner sep=+0pt },
  sensor range/.style = {draw, circle, dotted, minimum size = 3cm, inner sep= +0pt, thick},  
  sensor range irregular/.style = {draw,  dotted, inner sep= +0pt, thick}  
}
}

\tikzset{
  reverseclip/.style={insert path={(-10cm,-10cm) rectangle (30cm,30cm)}}
}

\begin{abstract}
We present a formal model developed to reason about topologies created by sensor ranges.
This model is  used to formalise the topological aspects
of an existing counting algorithm to estimate the number of targets in 
the area covered by the sensors. To that end, we present
a first-order logic tailored to specify relations between parts of
the space with respect to sensor coverage. The logic
serves as a specification language for Hoare-style
proofs of correctness of the topological computations of the algorithm, which
uncovers ambiguities in their results. 
Subsequently, we
extend the formal model as a step towards improving the estimation
of the algorithm.
 Finally, we sketch how the model can
be extended to take mobile sensors and temporal aspects
into account.
\end{abstract}

\section{Introduction}
\label{sec:intro}
The \emph{target counting problem} is the computational task of
estimating the total number of observable targets within a region
using local counts performed by a set of networked sensors. With
applications ranging from agriculture and wildlife protection to
traffic control and indoor security, this problem has emerged as one
of the most important applications of present and future sensor
networks. Several different kinds of sensors may be deployed depending
on the application domain. In this work, we focus on numeric
photo-electric sensors capable of counting but not identifying targets
in their vicinity, i.e., within their sensing range. Moreover, it is
assumed that the exact position of the sensors and the geometry of
their ranges is fully known. This information is referred to as the
\emph{topology} of a sensor network. 
The complexity of this problem lies in the fact that multiple sensors
may be observing the same target if it is located within the
intersection of their overlapping sensing ranges. This may lead to
wrong estimates as duplicate observations, together with the inability
to distinguish different targets, introduce over-counting. Various
algorithms have been developed to mitigate this issue. For example,
statistical information about target distribution is used
in~\cite{stat} and topological properties of the sensor network are
exploited in~\cite{Baryshnikov2009}. The reader is referred to the
recent work of Wu et al.~\cite{Wu2014} for a complete survey.
 
In our paper, we introduce a novel logical formalisation to reason about
topologies arising from overlapping sensor ranges. We then use this logic to
analyse how sensor topology can affect counting accuracy and to identify some
hidden assumptions in SCAN~\cite{Gandhi2008}, a geometry-based counting
algorithm. We remark that the generality of our approach allows us to carry out
the same kind of analysis for different target counting algorithms. Hence, we
believe our work is an important first step towards the formal analysis of the
topological properties crucial for the correct functioning of the systems we find
in domains such as Smart Cities and IoT.

In this work, we draw from the experience of one of the authors in
defining application-specific logics such as the spatial logic
introduced in~\cite{linker2015proof} for proving safety of traffic
manoeuvres on a multi-lane highway.
While the research on modal logics to reason about spatial properties
has been on the rise in the last decades~\cite{Aiello2007}, the focus
was mostly on the general properties of the logics themselves, for
example, decidability issues and complexity analysis of satisfaction.
Furthermore, the basic entities of the semantics seem to be
ill-fitting for our purposes. 
Consider the logic \(S4\), e.g.,
which is often considered to be the prototypical logic capturing the
notion of ``nearness''. Spatial models for \(S4\) are build upon a
topology, including an operator to compute the interior (or closure)
of sets. The semantics of a formula is 
defined with respect to a
point \(x\) in the topology, where the modal operator expresses that
properties hold for all points ``close'' to \(x\). However, in our
case, we are not interested in properties of single points, or 
their immediate surroundings, but rather in the properties of whole
regions of space, for example the intersections between sensor
ranges. More recently, 
 Ciancia et al. \cite{Ciancia2014} used  general closure spaces as 
semantics. However, the general issue with this approach for
us remains: while  overlapping regions 
can be modelled, it is cumbersome to reason  about them. 
 An approach more closely related to our intentions is the
\emph{Region Connection Calculus} (RCC)~\cite{Randell1992}. In RCC,
the basic entities are convex connected regions of space, and the
formulas describe the relations between them. Still, while it is easily
expressible that two regions have a non-empty intersection, it is not
possible to refer to only this intersection.

Our article is structured as follows. The next section gives a
concise overview of the SCAN algorithm, with a particular focus on its
topological aspects. Sect.~\ref{sec:static} introduces STL, a logic to
reason about the static topologies arising from overlapping sensor
ranges in target counting scenarios. Topological temporal changes can
be reasoned upon by using the temporal extension of STL defined in
Sect.~\ref{sec:dynamic-topology}. Finally, Sect.~\ref{sec:conc} ends
with conclusion and future work.


\section{SCAN algorithm}\label{sec:scan}
SCAN is a geometry-based target counting algorithm introduced by Gandhi et
al.~\cite{Gandhi2008} that computes an estimate of the number of targets by
producing upper and lower bounds on the target population. These bounds are
proved to provide worst-case guarantees for the target count for any choice of
targets and sensor ranges. In this approach, each target is modelled as a point
and sensor ranges are assumed to be convex shapes. This allows the algorithm to
handle realistic sensing scenarios, such as those characterised by asymmetric,
irregular, and anisotropic sensor ranges. Furthermore, it is assumed each sensor
is capable of counting exactly the number of targets present in its range, i.e.,
ideal sensing,\footnote{The sketch of an algorithm supporting sensing errors is
  given in~\cite{Gandhi2008}. We ignore it here as it falls outside the scope
  of this work.} the topology is fully known, and the spatial distribution of
targets is unknown.

The key idea behind SCAN is that, in general, a constant-factor approximation of
the true target count is impossible in two-dimensional space. However, an easy
approximation within factor $\sqrt{m}$ is possible, where $m$ is the maximum
\emph{degree of overlap} among sensor ranges. The degree of overlap is defined
here as the maximum number of sensing ranges covering a point in the plane. A
concise description of the various phases of SCAN is reported next.

\subsection{Definition}
Let $S$ (sensors), $C$ (counts), and $R$ (ranges) be sets ranged over by $s_i,
c_i, r_i$, respectively.
We write $c_i$ to indicate the number of targets detected by $s_i$. Similarly,
$r_i$ denotes the range of sensor $s_i$. The SCAN algorithm consists of the
following three steps:
\begin{enumerate}
\item\label{step_1} Topology $R$ is minimised by removing all redundant sensor
ranges, i.e., sensor ranges which are fully covered by the combination of one or more
other sensor ranges. The minimised topology is indicated by $R'$,
while the corresponding sensors and their readings are sets $S'$ and $C'$,
respectively.  Intuitively, no range can be deleted from $R'$ without losing some
coverage of the underlying space.
\item\label{step_2} The maximum overlap degree ($m$) of $R'$ is computed.
\item Degree $m$ is then used to compute
  the estimated number of targets $\hat{t}$ and its corresponding bounds:
  \begin{align*}
    \hat{t}=\frac{\sum_{c_i \in C'} c_i}{\sqrt{m}}\,,&& \left[\frac{\hat{t}}{\sqrt{m}},\, \hat{t}\sqrt{m}\right]\,.
  \end{align*}
\end{enumerate}
Note that $\hat{t}$ is the geometric mean of the bounds.

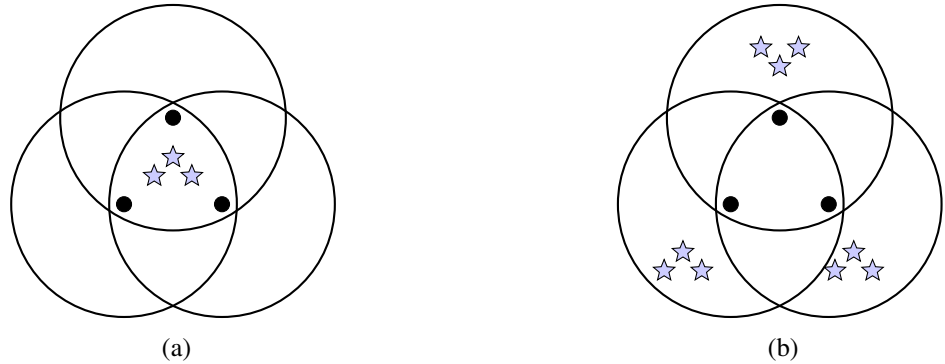
\begin{figure}
  \centering
  \begin{subfigure}[b]{0.45\textwidth}
    \centering
    \begin{tikzpicture}[target counting]

  \node[sensor] (s1) {};
  \node[sensor range, style=solid] at ($(s1)$) (r1) {};

  \node[sensor, below left = of s1,xshift =.5cm] (s2) {};
  \node[sensor range, style=solid] at ($(s2)$) (r2) {};

  \node[sensor, below right = of s1,xshift =-.5cm] (s3) {};
  \node[sensor range, style=solid] at ($(s3)$) (r3) {};

  \node[target, below = of s1, yshift =.75cm] (t1) {};
  \node[target,  below = of s1, xshift=.25cm, yshift=.5cm] (t2) {};
  \node[target, below = of s1, xshift=-.25cm, yshift=.5cm] (t3) {};

\end{tikzpicture}
    \caption{}\label{fig:scan_ex_a}
  \end{subfigure}
  \qquad%
  \begin{subfigure}[b]{0.45\textwidth}
    \centering
    \begin{tikzpicture}[target counting]

  \node[sensor] (s1) {};
  \node[sensor range, style=solid] at ($(s1)$) (r1) {};

  \node[sensor, below left = of s1,xshift =.5cm] (s2) {};
  \node[sensor range, style=solid] at ($(s2)$) (r2) {};

  \node[sensor, below right = of s1,xshift =-.5cm] (s3) {};
  \node[sensor range, style=solid] at ($(s3)$) (r3) {};

  \node[target, above = of s1, yshift =-.5cm] (t11) {};
  \node[target,  above = of s1, xshift=.25cm, yshift=-.25cm] (t21) {};
  \node[target, above = of s1, xshift=-.25cm, yshift=-.25cm] (t31) {};

  \node[target, below left = of s2, yshift =.5cm, xshift=.5cm] (t11) {};
  \node[target,  below left = of s2,  xshift=.75cm, yshift=.25cm] (t21) {};
  \node[target, below left = of s2, xshift=.25cm, yshift=.25cm] (t31) {};

  \node[target, below right = of s2, yshift =.5cm, xshift=.5cm] (t11) {};
  \node[target,  below right = of s2,  xshift=.75cm, yshift=.25cm] (t21) {};
  \node[target, below right = of s2, xshift=.25cm, yshift=.25cm] (t31) {};
  
\end{tikzpicture}
    \caption{}\label{fig:scan_ex_b}
   \end{subfigure}
  \caption{Example topologies with three~\protect\subref{fig:scan_ex_a}
    and nine~\protect\subref{fig:scan_ex_b} targets. In both examples,
    each sensor counts $3$ targets, the maximum overlap degree is $3$,
    and the estimated count computed by SCAN is $3\sqrt{3}\approx 5.2$, which
    leads to bound $\left[3,9\right]$.}\label{fig:scan_ex}
\end{figure}
The two example topologies in Fig.~\ref{fig:scan_ex} can help shed some
light on how SCAN works and why the estimation bounds are defined in this
manner. In both examples, there are three non-redundant sensors, the maximum
overlap degree is $3$ and each sensor counts three targets.  However, in
Fig.~\ref{fig:scan_ex_a}, the total number of targets is $3$, while, in
Fig.~\ref{fig:scan_ex_b}, it is $9$. In both cases, SCAN computes the estimated
number of targets $\hat{t}=3\sqrt{3}\approx 5.2$ and its bounds $\left[3,\,9\right]$.
Observe that the sensor counts do not allow to distinguish the two cases.  The
lowest possible true count occurs when all the targets are over-counted (three
times as the max overlap degree is $3$), i.e., $3$ as in
Fig.~\ref{fig:scan_ex_a}. The maximum true count arises when the readings are
not over counted, i.e., $9$ as in Fig.~\ref{fig:scan_ex_b}. These are exactly the
bounds computed by SCAN.

\paragraph{Discussion.}
The SCAN algorithm is  computationally efficient and
easy to deploy at a base station knowing the precise geometry of the
sensors’ locations and ranges.
However,
the algorithm gives acceptable estimates (i.e., tight bounds) only
when the maximum overlap degree 
is low.
This follows from the definition of estimate bounds and their width:
$s - \frac{s}{m}$, with $s=\sum_{c_i \in C'} c_i$. Another limitation
of SCAN is that the presentation given by its authors fails to include
a rigorous procedure to minimise topologies in two-dimensional
space. This is a crucial issue as we will discuss extensively in the
next section. Also, the algorithm disregards all the readings produced
by redundant sensors, raising the question if this information could
otherwise be used to obtain better estimates.



\section{Static sensor topologies}
\label{sec:static}
In this section, we will formalise the topological aspects of the 
SCAN algorithm presented in Sect.~\ref{sec:scan}. That is, we present
an algorithm to represent step~\ref{step_1}, by
removing sensors from 
 a given topology, such that all sensors are needed for
coverage. Furthermore, we present means to compute the maximum number
of overlapping sensor ranges in such a topology, to cover step~\ref{step_2}. To that end, we present
the \emph{Sensor Topology Logic} (STL) to reason about topologies created by overlapping sensor
ranges. STL is a two-sorted first-order logic with equality  and a single predicate
between  the sorts.
The models of STL are abstractions of sensor range topologies, inspired by the formalisations
of Euler diagrams \cite{Howse2002}. 
 We will use this logic to specify properties arising from two algorithms implementing
the topological computations of SCAN in two dimensions.
Furthermore, we prove these algorithms
to be correct using Hoare logic~\cite{Hoare1969}. However, we will not 
present fully formal Hoare logic proofs, but rather \emph{proof outlines}~\cite{Owicki1976}.
Note, our models are qualitative, and hence invariant under many geometric deformations.

\paragraph{Program verification using proof outlines.} 
A proof outline consists of the program to prove, annotated with assertions between
the program statements.\footnote{For more in-depth discussions, see Hoare's original contribution~\cite{Hoare1969} or Owicki and Gries' extension
to parallel programs~\cite{Owicki1976}.} These assertions are enclosed in curly braces and consist
of properties in a specification language, which is STL in our case. An annotated
 statement \(\{P\} s \{Q\}\) denotes that  \(Q\) holds after the execution of \(s\) under
the precondition \(P\). That is, if \(P\) holds initially, \(Q\) holds after executing
\(s\). If two assertions \(P\) and \(Q\) directly follow each other in a proof outline, i.e.,
\(\{P\}\{Q\}\), then we have to show that \(P\) implies \(Q\). For conditional statements \emph{if~\(P\)~then~\dots~else~\dots~end~if}, the condition
is added as an assertion to the beginning of the \emph{if} branch, while its negation is the first
annotation on the \emph{else} branch.  That is, we end up with the structure \emph{if~\(P\)~then~\(\{P\}\)~\dots~else~\(\{\lnot P\}\)~\dots~end~if}.  
 To verify a \emph{while} loop, i.e., \emph{while~\(Q\)~do~\dots~end~while}, we have to identify
the \emph{loop invariant} \(P\). If we can show that \(P\) holds before entering the loop and
after an execution of the loop, then we can also infer that \(P\) holds after leaving
the loop. Then, the conjunction of the invariant \(P\) and the loop condition \(Q\) is the first
assertion within the loop, while the conjunction \(P \land \lnot Q\) is the first assertion
after the loop. That is, we have \emph{\(\{P\}\)~while~\(Q\)~do~\(\{P \land Q\} \)~\dots~\(\{P\}\)~end~while~\(\{P \land \lnot Q\}\)}.    

\subsection{Syntax and semantics}
\label{sec:syntax-semantics}
The language of STL contains two sorts. These sorts are \emph{sensors} \(\sensors\) and \emph{zones} \(\zones\). 
The sort of sensors is countably infinite and
the sort for zones  is derived from this sort. In particular, \(\zones\) is the powerset of \(\sensors\).
However, models for the logic 
will by definition be finite. 
In addition to Boolean operators and first-order quantification, STL formulas only contain two predicates: equality
and the element relation \(\element{z}{\sense{s}}\), denoting that the zone \(z\) is contained in the 
sensor range of \(s\). It can also be read as the element relation of set theory, i.e., 
``the zone \(z\) is an element of the range of sensor  \(s\)''.

\begin{definition}[Syntax of STL]
\label{def:static-syntax}
  For a given \(n \in \N\), the syntax of the logic is given by the following definitions.
For  each  \(1 \leq i \leq n\), \(s_i\) is a \emph{sensor constant}. 
We denote the set of variables denoting zones or sensors by \(\text{Var}\). 
If \(x\) is either a sensor constant or a sensor variable, we call \(x\) a \emph{sensor term}.
Formulas are given by the following EBNF:
  \begin{align*}
    \varphi ::=  \bot \mid \theta = \theta \mid \element{z}{\sense{s}} \mid  \varphi \implies \varphi \mid \forall x \colon \varphi 
  \end{align*}
  where \(\theta\) are terms of the same sort, i.e., either zone variables or sensor terms, \(x\) is a variable, \(z\) is a zone variable and \(s\) is a sensor term.
  The other Boolean connectives (\(\land, \lor, \dots\)) and the existential quantifier are given by the usual abbreviations.
\end{definition}

 Models of STL consist of a finite set of sensors \(S\), 
a set of zones \(Z\) defining the topology, and a function \(\semSenseSing\), where
\(\semSenseSing\) associates a set of zones to each sensor, to model its range.
Formally, we have the following definition.
\begin{definition}[Static Topology Model]
\label{def:static-topology}
 A \emph{static topology model}  is a structure
\(\model = (S, Z, \semSenseSing,  \mathbf{s_1}, \dots, \mathbf{s_n} )\), where \(S\) is a set of 
sensors of sort \(\sensors\), \(Z\subseteq \powset{S}\) with \(\emptyset \not\in Z\) and for all \(\sensor{s} \in S\), there is at least
one \(z \in Z\) such that \(\sensor{s} \in z\). Furthermore, 
\(\semSenseSing \colon S \to \powset{Z}\) is a function of arity 1, 
and  \(\sensor{s_1}\) to \(\sensor{s_n}\) 
are enumerating \(S\). For all \(\sensor{s_i}\),  
 \(\semSenseSing \) is defined by \(\semSense{\sensor{s_i}} = \{z \mid \sensor{s_i} \in z \land z \in Z\}\). 
We will also write \(\sensor{s} \in \model\) for \(\sensor{s} \in S\). The set of all static topology models is denoted by \(\statmodels\).
In the following, we will often omit the distinguished elements \(\sensor{s_i}\) from models.
We denote the \emph{empty model}, i.e., the model where \(S = \emptyset\), simply by \(\emptyset\). 
The \emph{size} of a model \(\lattice = (S,Z,\semSenseSing)\), denoted by \(\size{\model}\), is the number of sensors within \(\model\), i.e.,
\(\size{\model} = \size{S}\). Furthermore, we will use the notation \(S_\model\), \(Z_\model\), and \(\semSenseSing_\model\) to refer
to the respective elements of \(\model\).
\end{definition}
Observe that each sensor constant of the syntax is associated with exactly one element of the sensor domain of a model. While
we denote the syntactic sensor constant with letters in italics, e.g., \(s\), \(t\), \(a\), \dots, we use
 bold letters for the semantic sensors \(\sensor{s}\), \(\sensor{t}\), \(\sensor{a}\), respectively.

\begin{figure}
\begin{subfigure}[b]{.49\linewidth}
\begin{center}
\begin{tikzpicture}[framed, target counting]
  \node[sensor, label=\(a\)] (s1) {};
  \node[sensor, label=\(b\), below right = of s1, yshift = -1cm] (s2) {};
  \node[sensor, label=\(c\), right = of s1, xshift=1cm] (s3) {};
  \node[sensor, label=\(d\), right = of s3, xshift = -.25cm, yshift=.25cm] (s4) {};
  
  \node[sensor range, solid] at ($(s1)$) (r1) {};
  \node[sensor range,  solid] at ($(s2)$) (r2) {};
  \node[sensor range, minimum size=3.75cm, solid] at ($(s3)$) (r3) {};
  \node[sensor range, solid, minimum size = 1.5cm] at ($(s4)$) (r4) {};

\begin{scope}
  \clip ($(s1)$) circle (1.5cm);
  \clip ($(s3)$) circle (1.875cm);
  \clip[reverseclip] ($(s2)$) circle (1.5cm);
  \fill[blue, opacity=.25] ($(s3)$) circle (3.75cm);
\end{scope}

\path[fill=green, opacity=.25] ($(s2)$) circle (1.5cm);
\end{tikzpicture}
\caption{Example of a Formal Topology}
\label{fig:example-topology}
\end{center}
\end{subfigure}
\begin{subfigure}[b]{.49\linewidth}
\begin{center}
\begin{tikzpicture}[framed, target counting]
  \node[sensor, label=\(a\)] (s1) {};
  \node[sensor, label=\(b\), right= 1cm of s1] (s2) {};
  \node[sensor, label=\(c\), left=1cm of s1] (s3) {};
  \node[sensor range irregular,  rounded corners, minimum height=3.1cm, minimum width =5.5cm] at ($(s1)$) (r1) {};

  \node[sensor range irregular, rounded corners, solid, minimum size=3cm] at ($(s2)$) (r2) {};
  \node[sensor range irregular, rounded corners, solid, minimum size=3cm] at ($(s3)$) (r3) {};
\end{tikzpicture}
\caption{Example of Ambiguous Irreducible Models}
\label{fig:example-ambiguity}
\end{center}
\end{subfigure}
\caption{Examples of Topologies}
\label{fig:examples}
\end{figure}
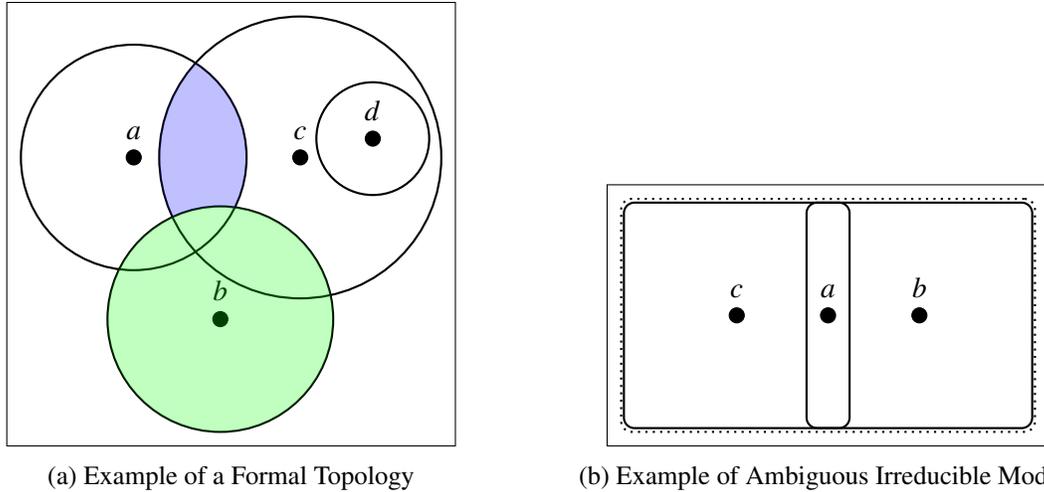

Consider the topology depicted in Fig.~\ref{fig:example-topology}, consisting of four sensors \(\sensor{a}\), \(\sensor{b}\), \(\sensor{c}\), \(\sensor{d}\) and their 
respective sensor ranges. Within its formalisation \(\model\), the intersection of \(\sensor{a}\) and \(\sensor{c}\) (coloured blue
in the figure) is formalised by a zone \(\{\sensor{a},\sensor{c}\} \in Z\). The sensor range of \(\sensor{b}\) (coloured green) consists of
all zones containing \(\sensor{b}\). It is given by the value
of the function \(\semSenseSing\), i.e., \(\semSense{\sensor{b}} = \{ \{\sensor{b}\}, \{\sensor{a},\sensor{b}\}, \{\sensor{b},\sensor{c}\},  \{\sensor{a},\sensor{b},\sensor{c}\} \}\).
Note that for a non-empty model, the definition of \(\semSenseSing\) and \(Z\) ensures that \(\semSense{\sensor{s}} \neq \emptyset\) for all \(\sensor{s} \in S\).
Furthermore, the empty model is uniquely defined as \((\emptyset, \emptyset, \emptyset)\).

\begin{definition}[Semantics]
\label{def:static-semantic}
A \emph{valuation} 
for a given model \(\model = (S,Z, \semSenseSing)\) is a function \(\val \colon \text{Var} \to S\cup Z \), 
where \(\val(s) = \sensor{s}\) and which respects
the sorts of terms. 
The semantics of a formula with respect to a model \(\model\) and a valuation
\(\val\) is given by the following inductive definition.
\begin{align*}
\model, \val & \not\models \bot && \text{for all } \lattice \text{ and } \val\\
  \model, \val & \models  \theta_1 = \theta_2 & \iff&   \val(\theta_1) = \val(\theta_2)\text{, where } \theta_1 \text{ and } \theta_2 \text{ are of the same sort}\\
\model, \val & \models \element{z}{\sense{s}} &\iff&  \val(z) \in \semSense{\val(s)} \text{, where } z \text{ is a term of sort } \zones \text{ and } s \text{ of } \sensors\\
\model, \val & \models \varphi_1 \implies \varphi_2 &\iff&  \model, \val \models \varphi_1 \text{ implies } \model, \val \models \varphi_2 \\
\model, \val & \models \forall x \colon \varphi  &\iff& \text{for all correctly sorted } \alpha \in  \model, \val\subst{x}{\alpha} \models \varphi
\end{align*}
A formula \(\varphi\) is \emph{valid in a model \(\model\)}, denoted by \(\model \models \varphi\), if \(\model, \val \models \varphi\) 
for all valuations \(\val\). Similarly, a formula \(\varphi\) is \emph{valid}, written \(\models \varphi\), if it is valid in all models.
\end{definition}

Using STL, we can express relations between different sensor ranges. 
In particular, we can mimic 
the subset relation, and 
we overload the equality sign to cover full sensor ranges as well.
\begin{align*}
  \sense{s} \leq \sense{t} &\equiv \forall x \colon \element{x}{\sense{s}} \implies \element{x}{\sense{t}}\\
\sense{s} = \sense{t} & \equiv \forall z \colon \element{z}{\sense{s}} \equivalent \element{z}{\sense{t}}
\end{align*}

Furthermore, we will employ several additional abbreviations.
The formula \(\ldisjoint{s}{t}\) expresses
that the ranges of \(s\) and \(t\) are disjoint. Observe that this relation is symmetric. Indeed, it is the
negation of \(\lcommon{s}{t}\), which denotes that the ranges of \(s\) and \(t\) have a common element.
The abbreviation \(\loverlap{s}{t}\) states that the ranges of \(s\) and \(t\) have a common element, but neither
is a subset of the other. That is, the ranges of \(s\) and \(t\) are properly overlapping. Finally, 
\(\lpsubset{s}{t}\) denotes that the sensor range of \(s\) is a proper subset of \(t\)'s sensor range.
\begin{align*}
\ldisjoint{s}{t} & \equiv \forall x \colon \element{x}{\sense{s}} \implies \lnot\element{x}{\sense{t}}\\
\lcommon{s}{t} & \equiv \exists x \colon \element{x}{\sense{s}} \land \element{x}{\sense{t}}\\
\loverlap{s}{t} & \equiv  \lcommon{s}{t} \land \lnot ( \sense{s} \leq \sense{t}) \land \lnot (\sense{t} \leq \sense{s})\\
\lpsubset{s}{t} & \equiv    (\sense{s} \leq \sense{t}) \land \lnot (\sense{t} \leq \sense{s})
\end{align*}

Consider the topology of Fig.~\ref{fig:example-topology} and its formalisation \(\model\). The sensor ranges 
in \(\model\) are
given by
\begin{align*}
  \semSense{\sensor{a}} &=\{\{\sensor{a}\}, \{\sensor{a},\sensor{b}\}, \{\sensor{a},\sensor{c}\}, \{\sensor{a},\sensor{b},\sensor{c}\} \} &
  \semSense{\sensor{b}} &= \{\{\sensor{b}\}, \{\sensor{a},\sensor{b}\}, \{\sensor{b},\sensor{c}\}, \{\sensor{a},\sensor{b},\sensor{c}\}\} \\
  \semSense{\sensor{c}} & = \{\{\sensor{c}\}, \{\sensor{a},\sensor{c}\}, \{\sensor{b},\sensor{c}\}, \{\sensor{c},\sensor{d}\}, \{\sensor{a},\sensor{b},\sensor{c}\}  \} &
  \semSense{\sensor{d}} & = \{\{\sensor{c},\sensor{d}\}\}
\end{align*}
That is, we have, for example,
\begin{align*}
  \model \models & \forall z \colon \element{z}{\sense{d}} \implies \element{z}{\sense{c}}\enspace ,&
\model \models & \exists z \colon \element{z}{\sense{c}} \land \lnot \element{z}{\sense{d}}\enspace ,\\
\model \models& \lpsubset{d}{c}\enspace ,&
\model \models& \ldisjoint{a}{d} \enspace . 
\end{align*}

\subsection{Identifying redundant sensors}
\label{sec:logic-redundant}
In this section, we present the algorithm to formalise step~\ref{step_1} of the SCAN algorithm. 
The following formula can be used to identify sensors whose sensor range is covered by a union of
other sensor ranges. That is, such a sensor can be removed from a model without impact on coverage.
\begin{align*}
\lred{s} \equiv \forall x \colon \element{x}{\sense{s}} \implies \exists y \colon  y \neq s \land \element{x}{\sense{y}}  
\end{align*}
Hence, a sensor is redundant if, and only if, all parts of space within its range are also
covered by a different sensor. Using this formula, we can give the notion of an irreducible model.
\begin{definition}[Irreducible Model]
\label{def:minimal-model}
A non-empty model \(\model\)  is 
\emph{irreducible}, if it contains no redundant sensor. That is, 
\(\model \models \forall s \colon \lnot \lred{s}\).
If \(\model = (S,Z, \semSenseSing)\) is a non-empty model and
\(\model^M = (S^M,Z^M,  \semSenseSing^M)\) is an irreducible model such that \(S^M \subseteq S\), \( Z^M =\{z^M \mid \exists z \in Z \colon z^M = z \cap S^M\}\), 
 and \(\semSenseSing^M\) is the function induced by \(S^M\) and \(Z^M\),
then we call \(\model^M\) \emph{an irreducible model of \(\model\)}.
\end{definition}
Note that irreducible models of a given model are not unique. 
Consider, for example, the topology indicated in Fig.~\ref{fig:example-ambiguity}, where the 
sensor range of \(\sensor{b}\) and \(\sensor{c}\) is denoted by solid rectangles, while the sensor range of \(\sensor{a}\) is drawn
dotted. This topology is formalised by
 \(\model = (\{\sensor{a},\sensor{b},\sensor{c}\}, Z,  \semSenseSing)\), where
\(Z = \{\{\sensor{a},\sensor{b}\}, \{\sensor{a},\sensor{b},\sensor{c}\}, \{\sensor{a},\sensor{c}\}\}\),  that is, 
 \(\semSense{\sensor{a}} = \semSense{\sensor{b}} \union \semSense{\sensor{c}}\) 
and \(\semSense{\sensor{b}} \neq \semSense{\sensor{c}}\). Then both \(\model_1 = (\{\sensor{a}\}, \{\{\sensor{a}\}\}, \semSenseSing^1)  \)
and \(\model_2 = (\{\sensor{b},\sensor{c}\}, \{\{\sensor{b}\}, \{\sensor{b},\sensor{c}\}, \{\sensor{c}\}\}, \semSenseSing^2)  \) are irreducible models of \(\model\).

\begin{definition}[Reduction]
\label{def:reduction}
Let \(\model = (S, Z,  \semSenseSing)\) be a model, and \(S^\prime \subseteq S\) a proper subset
of \(S\). The \emph{reduction of \(\model\) by \(S^\prime\)} is given
by \(\model \setminus S^\prime = (S^R, Z^R, \semSenseSing^R)\), where 
\begin{enumerate}
\item \(S^R = S \setminus S^\prime\),
\item \(Z^R = \{ z^R \mid z \in Z \land z^R = z \setminus S^\prime\}\), and
\item \(\semSenseSing^R\) is the function induced by \(S^R\) and \(Z^R\) according to Def.~\ref{def:static-topology}.
\end{enumerate}
\end{definition}

\begin{figure}
  \centering
\begin{subfigure}[b]{.49\textwidth}
  \begin{algorithmic}[1]
    \Function{reduce}{\model}
    \While{\(\model \models \exists t \colon \lred{t}\)}
    \State \(s \gets \chs{\{\sensor{t} \mid \model \models  \lred{t}\}}\)
    \State \(\model \gets \model \setminus \{s\}\)
    \EndWhile
    \State \Return \(\model\)
    \EndFunction
  \end{algorithmic}
\caption{Reduce a given model to an irreducible model}
\label{alg:reduce}
\end{subfigure}
\begin{subfigure}[b]{.49\textwidth}
  \begin{algorithmic}[1]
    \Function{maximum}{\model}
    \State \(m \gets  \size{\model} \)
    \While{\(m > 1\)}
    \If{\(\model \models O^m\)}
    \State \Return \(m\)
    \Else
    \State \(m \gets m-1\)
    \EndIf
    \EndWhile
    \State \Return \(m\)
    \EndFunction
 \end{algorithmic}
\caption{Compute maximum number of overlaps}
\label{alg:non-redundant-maximum-overlaps}
\end{subfigure}
  \caption{Topological Computations of SCAN}
  \label{fig:scan-topo-alg}
\end{figure}

We can prove that every non-empty sensor topology \(\model\) contains an irreducible model. This by no means proves that this 
reduction  is \emph{unique}, since a suitable counterexample is already given in Fig.~\ref{fig:example-ambiguity}. 
However,
this lemma has to be carefully stated, since sensors that are redundant in \(\model\), will not be redundant in
 irreducible models of \(\model\). Hence, we can only state the existence of a  subset of sensors that has to be removed to
yield an irreducible model of \(\model\).  
\begin{lemma}
Every non-empty model contains a non-empty irreducible model. 
\begin{align*}
  \model \neq \emptyset \iff \exists R \subseteq \{\sensor{c} \mid \model \models \lred{c}\} \colon \model \setminus R  \neq \emptyset \text{ and } \model \setminus R \models \forall s \colon \lnot \lred{s}
\end{align*}
\end{lemma}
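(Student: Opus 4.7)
The plan is to prove both directions of the biconditional. The ($\Leftarrow$) direction is immediate: since reduction can only remove sensors, $\model \setminus R \neq \emptyset$ forces $\model \neq \emptyset$.

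For the forward direction, I would proceed by induction on $\size{\model}$. In the base case $\size{\model}=1$, Def.~\ref{def:static-topology} forces the unique sensor $\sensor{s}$ to satisfy $\semSense{\sensor{s}} = \{\{\sensor{s}\}\}$, so no other sensor can witness $\lred{s}$, and $R=\emptyset$ works. For the inductive step, if $\model$ is already irreducible, take $R=\emptyset$. Otherwise, pick any redundant $\sensor{s}$. Because $\semSense{\sensor{s}} \neq \emptyset$ (by Def.~\ref{def:static-topology}), the definition of $\lred{s}$ supplies a sensor distinct from $\sensor{s}$, so $\model' := \model \setminus \{\sensor{s}\}$ is non-empty and strictly smaller. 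Applying the induction hypothesis to $\model'$ yields some $R' \subseteq \{\sensor{c} \mid \model' \models \lred{c}\}$ with $\model' \setminus R'$ non-empty and irreducible. Then $R := \{\sensor{s}\} \cup R'$ is the desired set, and a direct unfolding of Def.~\ref{def:reduction} confirms $\model \setminus R = \model' \setminus R'$, since set-difference over the zones composes associatively.

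The main obstacle is the side condition $R \subseteq \{\sensor{c} \mid \model \models \lred{c}\}$, because the induction only guarantees $R' \subseteq \{\sensor{c} \mid \model' \models \lred{c}\}$, and redundancy in the intermediate $\model'$ need not a priori coincide with redundancy in the original $\model$. I would address this via a monotonicity claim: if $\sensor{t}$ is non-redundant in $\model$ and $\sensor{t} \notin S'$, then $\sensor{t}$ is non-redundant in $\model \setminus S'$. The key step is to observe that $\sensor{t}$ is non-redundant in any model exactly when $\{\sensor{t}\} \in Z_\model$; indeed, the only zone of $\semSense{\sensor{t}}$ that can fail the ``some other $\sensor{y}$'' condition of $\lred{t}$ is the singleton $\{\sensor{t}\}$ itself. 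Since $\{\sensor{t}\} \setminus S' = \{\sensor{t}\}$ whenever $\sensor{t} \notin S'$, this singleton persists into $Z_{\model \setminus S'}$, so non-redundancy is preserved. Taking the contrapositive, every sensor redundant in $\model'$ was already redundant in $\model$, so $R' \subseteq \{\sensor{c} \mid \model \models \lred{c}\}$ and the induction closes. Note that the procedure thus sketched is precisely the algorithm \textsc{reduce} of Fig.~\ref{alg:reduce}, so this argument also underwrites its partial correctness.
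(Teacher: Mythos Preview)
Your proof is correct and structurally the same as the paper's: both peel off one redundant sensor, apply the induction hypothesis to the smaller model, and take $R = \{\sensor{s}\} \cup R'$. The paper inducts on the number of redundant sensors rather than on $\size{\model}$, but that is an inessential difference.

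Where your argument genuinely goes beyond the paper is in handling the side condition $R \subseteq \{\sensor{c} \mid \model \models \lred{c}\}$. The paper simply asserts that the redundant-sensor count strictly decreases after removing $\sensor{d}$, and never checks that the $R$ produced by the induction hypothesis (which is only known to consist of sensors redundant in $\model \setminus \{\sensor{d}\}$) lies inside the redundant set of the original $\model$. Your characterisation ``$\sensor{t}$ is non-redundant iff $\{\sensor{t}\} \in Z_\model$'' and the ensuing monotonicity claim (non-redundancy survives reduction) are exactly what is needed to close both of these gaps, and the argument is clean. So your version is more complete than the paper's on precisely the point you identified as the main obstacle.
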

\begin{proof}
  The direction from right to left is immediate, since \(\model\) contains at least as many sensors as \(\model \setminus R\) for all
possible sets \(R\). Now let \(\model \neq \emptyset\) and proceed by induction on the number \(n\) of redundant sensors in \(\model\).
That is \(n = | \{ \sensor{c} \mid \model \models \lred{c}\}|\). If \(n=0\), then \(\model\) is irreducible and we 
can choose \(R=\emptyset\) to finish the base case. For the induction step, let the lemma be true for all \(n' \leq n\) and let
\(|\{\sensor{c} \mid \model \models \lred{c}\}| = n+1\). Choose one \(\sensor{d} \in  \{\sensor{c} \mid \model \models \lred{c}\}\).
Then \(|\{\sensor{c} \mid \model \setminus\{\sensor{d}\} \models \lred{c}\}| < |\{\sensor{c} \mid \model \models \lred{c}\}|\).
Hence, by the induction hypothesis, there is an \(R \subseteq \{\sensor{c} \mid \model \setminus\{\sensor{d}\} \models \lred{c}\}\) such that
 \( (\model \setminus\{\sensor{d}\})\setminus R \neq \emptyset \) and \( (\model \setminus\{\sensor{d}\}) )\setminus R \models \forall s \colon \lnot \lred{s}\).
Now let \(R^\prime = R \cup \{\sensor{d}\}\). Then we have \(\model \setminus R^\prime \neq \emptyset \) and \(\model \setminus R^\prime \models \forall s \colon \lnot \lred{s}\) and
we are done.
\end{proof}

Figure~\ref{alg:reduce} shows the algorithm allowing us to reduce a given model to an irreducible model covering the same space, thus representing
step~\ref{step_1} of SCAN. The irreducible
model returned by the algorithm is not unique, since the algorithm non-deterministically chooses a 
redundant sensor to remove from the model (denoted by the function \(\chsSing\)) . Since the algorithm removes one sensor at a time, we have
to ensure that the reduction of a model by a redundant sensor does again yield a non-empty model.
\begin{lemma}
\label{lem:redundant}
Let \(\sensor{d}\) be a redundant sensor in \(\model\), i.e., \(\model \models \lred{d}\). Then
\(  \model \neq \emptyset \Rightarrow \model \setminus \{\sensor{d}\} \neq \emptyset\, . \)
\end{lemma}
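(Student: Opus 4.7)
The plan is to argue by contradiction. Suppose $\model \neq \emptyset$ and $\sensor{d}$ is redundant, but $\model \setminus \{\sensor{d}\} = \emptyset$. Since the empty model is the unique one with no sensors, this means $S_\model = \{\sensor{d}\}$, i.e., $\sensor{d}$ is the only sensor in $\model$.

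From here I would invoke the well-formedness conditions on static topology models from Def.~\ref{def:static-topology}: for every $\sensor{s} \in S_\model$ there is at least one $z \in Z_\model$ with $\sensor{s} \in z$. Applied to $\sensor{d}$, this gives some $z \in Z_\model$ with $\sensor{d} \in z$, so by the definition of $\semSenseSing$ we have $z \in \semSense{\sensor{d}}$, and hence $\semSense{\sensor{d}} \neq \emptyset$. Crucially, this is the step that prevents the redundancy formula from holding vacuously.

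Next I would unfold the hypothesis $\model \models \lred{d}$, i.e., $\forall x \colon \element{x}{\sense{d}} \implies \exists y \colon y \neq d \land \element{x}{\sense{y}}$. Instantiating the outer quantifier with the zone $z$ found above, the premise $\element{z}{\sense{d}}$ holds, so the conclusion yields some semantic sensor $\sensor{y} \in S_\model$ with $\sensor{y} \neq \sensor{d}$ and $z \in \semSense{\sensor{y}}$. This contradicts $S_\model = \{\sensor{d}\}$, which contains no such $\sensor{y}$. Hence $|S_\model| \geq 2$, and consequently $S_\model \setminus \{\sensor{d}\} \neq \emptyset$, so $\model \setminus \{\sensor{d}\} \neq \emptyset$ by Def.~\ref{def:reduction}.

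The argument is short and the only obstacle is bookkeeping: one needs to make explicit use of the model constraint that every sensor appears in some zone, because otherwise a one-sensor model with an ``empty'' range would vacuously satisfy $\lred{d}$ and would be a genuine counterexample. Once that constraint is in play the contradiction is immediate.
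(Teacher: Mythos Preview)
Your proof is correct and follows essentially the same line as the paper's own proof: both use the model constraint that every sensor has a non-empty range to pick a zone \(z \in \semSense{\sensor{d}}\), and then instantiate the redundancy formula at \(z\) to obtain a witness sensor \(\sensor{s} \neq \sensor{d}\), which shows \(S_\model \setminus \{\sensor{d}\} \neq \emptyset\). The only cosmetic difference is that you wrap the argument as a proof by contradiction, whereas the paper proceeds directly; the underlying reasoning and the key observation about non-vacuous redundancy are identical.
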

\begin{proof}
Let  \(\model \neq \emptyset \) and \(\model \models \lred{d}\). Hence, there is at least one sensor \(\sensor{s}\)
different from \(\sensor{d}\) such that \(\model \models d \neq s \land \exists x \colon \element{x}{ \sense{d}} \land \element{x}{\sense{s}}\), since
\(\semSense{\sensor{s}} \neq \emptyset\) for all \(\sensor{s}\). Due to the existence of \(\sensor{s}\), we get \(\model \setminus \{\sensor{d}\} \neq \emptyset\).  
\end{proof}
 
With this lemma, we can provide a proof outline for the reduction algorithm, as shown in Fig.~\ref{prf:alg-reduce}. Note 
that the precondition of \(\model\) being non-empty allows us to apply Lemma~\ref{lem:redundant} while choosing a redundant
sensor to remove. This also shows that the algorithm terminates, since otherwise it would have to remove redundant 
sensors infinitely often, which contradicts the finiteness of the models.

\begin{figure}[t]
\begin{subfigure}[b]{.62 \linewidth}
  \begin{algorithmic}[1]
    \assert[Pre:]{\model \neq \emptyset}
    \Function{reduce}{\model}
    \assert{\model  \neq \emptyset}
    \While{\(\model \models \exists t \colon \lred{t}\)}
    \assert{\model \models  \exists t \colon \lred{t)} \land \model  \neq \emptyset}
    \assert{\model \setminus \{\chs{\{\sensor{t} \mid \model \models \lred{t}\}}\} \neq \emptyset\ (\text{Lemma}~\ref{lem:redundant})}
    \State \(s \gets \chs{\{\sensor{t} \mid \model \models  \lred{t}\}}\)
    \assert{\model \setminus \{s\} \neq \emptyset}
    \State \(\model \gets \model \setminus \{s\}\)
    \assert{\model \neq \emptyset}
    \EndWhile
    \assert{\lnot (\model \models  \exists t \colon \lred{t})  \land \model  \neq \emptyset}
    \assert{\model \models  \forall t \colon \lnot \lred{t}  \land \model  \neq \emptyset}
    \State \Return \(\model\)
    \EndFunction
  \end{algorithmic}
\caption{Proof Outline for Alg.~\ref{alg:reduce}}
\label{prf:alg-reduce}
\end{subfigure}
\begin{subfigure}[b]{.37\linewidth}
  \begin{algorithmic}[1]
 \assert[Pre: ]{\model \neq \emptyset}
    \Function{maximum}{\model}
    \assert{\size{\model}\geq r_\model \land r_\model \geq 1}
    \State \(m \gets  \size{\model} \)
    \assert{m \geq r_\model}
    \While{\(m > 1\)}
    \assert{m>1 \land m\geq r_\model}
    \If{\(\model \models O^m\)}
    \assert{m\geq r_\model \land \model \models O^m}
    \assert{m = r_\model\ (\text{Fact}~\ref{fact:one-overlap})}
    \State \Return \(m\)
    \Else
    \assert{m\geq r_\model \land \model \not\models O^m}
    \assert{ m-1 \geq r_\model\ (\text{Fact}~\ref{fact:one-overlap})}
    \State \(m \gets m-1\)
    \assert{m \geq r_\model}
    \EndIf
    \EndWhile
    \assert{m \leq 1 \land m\geq r_\model}
    \assert{m = r_\model\ (\text{Fact}~\ref{fact:one-overlap}, \text{Lemma}~\ref{lem:minimum-overlap})}
    \State \Return \(m\)
    \EndFunction
  \end{algorithmic}
  \caption{Proof Outline for Alg.~\ref{alg:non-redundant-maximum-overlaps}}
  \label{prf:alg-non-redundant-maximum-overlaps}
\end{subfigure}
\caption{Proof Outlines for Topological Computations of SCAN}
\label{prf:scan}
\end{figure}
\subsection{Compute maximum amount of overlap}
\label{sec:logic-maximum-overlap}
In the following section, we show how to formalise step~\ref{step_2} of SCAN.
To that end, we define
the following formula for each \(m\) (bounded by the number of sensors in our model).
\begin{align*}
  O^m &\equiv \exists x,  y_1, \dots, y_m \colon \bigwedge_{i=1}^m \element{x}{\sense{y_i}} \land \bigwedge_{i \neq j} y_i \neq y_j
\end{align*}
That is, \(O^m\) is true, if, and only if there is a zone in the model which is covered by \(m\) distinct sensors. Observe that the 
sorting of terms  (in particular, the application of \(\senseSing\)) ensures that all the \(y_i\) are indeed sensors and not 
arbitrary elements of the space represented by \(Z\).  
Then, we can show that every topology model contains  a zone covered by at least one sensor.
\begin{lemma}
\label{lem:minimum-overlap}
  Let \(\model \neq \emptyset\). Then \(\model \models O^1\). 
\end{lemma}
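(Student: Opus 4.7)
The plan is to unfold the abbreviation $O^1$ and then directly exhibit witnesses from the model. Recall that $O^1 \equiv \exists x, y_1 \colon \element{x}{\sense{y_1}}$ (the conjunction of inequalities is vacuous when $m=1$). So I need to find a zone $x$ and a sensor $y_1$ such that $x \in \semSense{y_1}$.

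Since $\model \neq \emptyset$, the sensor set $S_\model$ is non-empty by the definition of the empty model, so I can pick some $\sensor{s} \in S_\model$. First I would invoke the standing requirement from Def.~\ref{def:static-topology} that for every $\sensor{s} \in S$ there exists at least one $z \in Z$ with $\sensor{s} \in z$. Let $z$ be such a zone. Then by the defining equation $\semSense{\sensor{s}} = \{z \mid \sensor{s} \in z \land z \in Z\}$ of the sensor range function, we have $z \in \semSense{\sensor{s}}$.

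Finally, I would set up a valuation $\val$ with $\val(x) = z$ and $\val(y_1) = \sensor{s}$ to conclude $\model, \val \models \element{x}{\sense{y_1}}$, whence $\model \models O^1$ follows by the semantics of the existential quantifier. The argument is essentially a one-liner: the only obstacle is keeping the sorts straight, namely ensuring $z$ is indeed a zone and $\sensor{s}$ a sensor, which is immediate from the way zones and sensor ranges are defined in Def.~\ref{def:static-topology}. No induction or case analysis is required.
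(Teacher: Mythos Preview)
Your proposal is correct and follows essentially the same approach as the paper: unfold $O^1$ to $\exists x, y \colon \element{x}{\sense{y}}$, use non-emptiness of $\model$ to obtain a sensor $\sensor{s}$ and a zone $z \in \semSense{\sensor{s}}$, and instantiate the existential quantifiers accordingly. Your version is even slightly more explicit in invoking Def.~\ref{def:static-topology} to justify why $\semSense{\sensor{s}}$ is non-empty.
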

\begin{proof}
If \(\model \neq \emptyset\), we know that there is at least
one \(\sensor{s} \in S\) and consequently one \(z \in \semSense{\sensor{s}}\). 
Observe that \(O^1 \equiv \exists x, y \colon \element{x}{\sense{y}}\). Now, let \(\val\) an
arbitrary valuation. During the evaluation of the existential quantifier, we can choose
the new value for \(x\) to be \(z\) and for \(y\) to be \(\sensor{s}\). Hence
\(\model, \val \models O^1\).
 \end{proof}

Figure~\ref{alg:non-redundant-maximum-overlaps} presents the formal algorithm
of the computation of the maximal number of overlapping sensors. To prove it to be correct, we need
some minor definitions. We let
\(r_\model\) be the maximum of the number of non-redundant overlapping sensors in the model
\(\model\), i.e.,
\begin{align*}
  r_\model = \max\{m \mid \model \models O^m\}\enspace .
\end{align*}
We want to show that our algorithm always returns \(r_\model\). 
For a non-empty model \(\model = (S,Z,\semSenseSing)\), 
we have \(r_\model \geq 1\) by Lemma~\ref{lem:minimum-overlap}. Furthermore, the definition of \(r_\model\) establishes the 
following fact.
\begin{fact}
\label{fact:one-overlap}
The formula \(O^m\) is valid in a static topology model \(\model\), if and only if \(m\) is at most the maximum amount of overlaps within \(\model\).
That is,  \(\model \models O^m \iff m \leq r_\model\).
\end{fact}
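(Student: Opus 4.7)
The plan is to prove the two directions of the biconditional separately, with the backward direction requiring the key monotonicity observation that $O^m$ being satisfied is downward-closed in $m$.

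For the forward direction ($\Rightarrow$), I would simply invoke the definition of $r_\model$. Since $r_\model = \max\{k \mid \model \models O^k\}$, any $m$ such that $\model \models O^m$ belongs to this set, and so $m \leq r_\model$ directly. Before using this, I should note that the max is well-defined: the set is non-empty by Lemma~\ref{lem:minimum-overlap} (assuming $\model \neq \emptyset$, which is implicit in the algorithm's precondition), and it is bounded above by $\size{\model}$ because the witnesses $y_1, \dots, y_m$ in $O^m$ must be pairwise distinct sensors drawn from the finite set $S_\model$. Hence the maximum is attained.

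For the backward direction ($\Leftarrow$), the central step is a monotonicity lemma: if $\model \models O^k$ and $1 \leq m \leq k$, then $\model \models O^m$. To prove this, I would unfold the semantics: given a valuation $\val$ and witnesses $\val(x) = z$ and $\val(y_i) = \sensor{s}_i$ for $1 \leq i \leq k$ with all $\sensor{s}_i$ distinct and $z \in \semSense{\sensor{s}_i}$ for each $i$, I simply reuse the first $m$ of these witnesses, i.e., $\val\subst{x}{z}\subst{y_1}{\sensor{s}_1}\cdots\subst{y_m}{\sensor{s}_m}$, to satisfy $\bigwedge_{i=1}^{m} \element{x}{\sense{y_i}} \land \bigwedge_{i \neq j \leq m} y_i \neq y_j$. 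Since $r_\model$ is attained, $\model \models O^{r_\model}$, and then for any $m$ with $1 \leq m \leq r_\model$, monotonicity gives $\model \models O^m$.

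The main obstacle is not really mathematical difficulty but careful bookkeeping: I must ensure that the definition of $r_\model$ presumes a non-empty model (otherwise the set over which we take the max could be empty), and I should be explicit about which formulas $O^m$ are even syntactically defined (the excerpt bounds $m$ by the number of sensors). Modulo these formalities, the proof is essentially a restriction-of-witness argument combined with the defining property of the maximum.
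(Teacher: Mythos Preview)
Your proof is correct and matches the paper's treatment: the paper does not give a proof at all, simply remarking that ``the definition of \(r_\model\) establishes the following fact,'' and your argument is exactly the natural unpacking of that definition (membership in the set for the forward direction, downward closure via restriction of witnesses for the backward direction). Your additional care about well-definedness of the maximum is appropriate and goes slightly beyond what the paper makes explicit.
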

With these notions at hand, we are able to prove  correctness  of the algorithm in Fig.~\ref{alg:non-redundant-maximum-overlaps} (see Fig.~\ref{prf:alg-non-redundant-maximum-overlaps}). 
We do not 
refer to topological properties in the proof outline, hence the specification language
is just  arithmetic. 
The loop invariant is that the variable \(m\) is greater or 
equal to the maximal amount of overlapping sensor ranges, i.e. \(m \geq r_\model\).
Observe that the algorithm terminates due to the finiteness of the models.

\subsection{Non-destructive target estimation}
\label{sec:dyn-target-estimation}
In the previous sections, we formalised  the topological aspects of the SCAN algorithm.
However, the computation we presented is destructive,
i.e., we remove sensors from a given model until we get a irreducible model. 
Hence, we removed information that could be used 
to tighten the estimation  bounds computed by the algorithm. In this section, we show
how this construction may be altered to preserve this information. To that
end, instead of removing sensors, we will mark the sensors to be either
necessary or unnecessary for the computation. This slightly complicates the algorithms,
since we cannot simply identify the necessary sensors by finding zones that 
are only contained within a single sensor range, but we instead have  to disregard
the sensors already marked as unnecessary.

Hence, we extend the syntax and semantics with  predicates \(\necessarySymb\) and \(\unnecessarySymb\) to express, whether a sensor
is considered (un-)necessary to cover the given space. To that end, we extend a model over the set of sensors \(S\) 
with two new sets \(\necessarySem \subseteq S\) and \(\unnecessarySem \subseteq S\), where \(\necessarySem \cap \unnecessarySem = \emptyset\). 
Syntactically, both predicates 
are unary, and their semantics is given by
\begin{align*}
\model, \val & \models \necessary{s} &\iff& \val(\sensor{s}) \in \necessarySem \,, \\
\model, \val & \models \unnecessary{s} &\iff& \val(\sensor{s}) \in \unnecessarySem \,. 
\end{align*}
The predicates \(\necessarySymb\) and \(\unnecessarySymb\) allow us during the algorithms to categorise sensors used
for computing the estimation of targets. 
 Consider the example given in Sect.~\ref{sec:logic-redundant} 
for irreducible models.
 If we mark \(\sensor{a}\) as necessary, but do not remove \(\sensor{b}\) and \(\sensor{c}\), \(\sensor{a}\)
would still satisfy the redundancy formula. We rectify this situation by relaxing our notion of
redundancy to capture \emph{possible redundancy}. 
\begin{align*}
  \lnred{s} &\equiv \forall x \colon \element{x}{\sense{s}} \implies \exists y\colon y \neq s \land \lnot \unnecessary{y} \land \element{x}{\sense{y}}\\  
\lcovered & \equiv \forall z \exists s \colon \lnot \unnecessary{s} \land \element{z}{\sense{s}}
\end{align*} 
In contrast to \(\lred{s}\), the formula \(\lnred{s}\) requires that the covering sensors must not be included in the set \(\unnecessarySem\), i.e., they may not be deemed
unnecessary for full coverage. 
Of course, we have to modify the definition of irreducible models as well, since we will keep redundant sensors in
the models.
\begin{definition}[Irreducible Model (2)]
  Let \(\model = (S,Z, \necessarySem, \unnecessarySem, \semSenseSing)\) be a non-empty static topology 
model. We call \(\model\) an \emph{irreducible model}, if and only if it satisfies the following formula
\begin{align*}
(  \forall x \exists y \colon \element{x}{\sense{y}} \land \necessary{y}) \land (\forall s \colon \necessary{s} \lor \unnecessary{s})  \land (\forall s \colon  \lnred{s} \implies \unnecessary{s}) .
\end{align*}
That is, every zone in \(\model\) is covered by at least one sensor marked as necessary, every sensor is either necessary or unnecessary, and 
every possibly redundant sensor is indeed marked as unnecessary.
\end{definition}

Figure~\ref{alg:reduce-non-dest} shows the algorithm for non-destructive reduction. It is similar
to the destructive algorithm, but instead of completely removing the sensors deemed unnecessary, it
just marks them appropriately. Finally, all remaining sensors are marked as necessary. 
While we do not show a formal proof of termination, observe that in each step, a new sensor is marked as unnecessary. Since
the model contains only finitely many sensors, the algorithm has to terminate.

\begin{figure}
  \begin{algorithmic}[1]
    \Function{reduce}{\model}
    \While{\(\model \models \exists s \colon \lnred{s}\)}
    \State{\( s \gets \chs{\{\sensor{s} \mid \model \models \ \lnred{s}\}}\)}
    \State{\(\model \gets (S_\model, Z_\model,N_\model,U_\model \cup \{s\},\semSenseSing_\model)  \)}
    \EndWhile
    \State{\(\model \gets (S_\model, Z_\model,S_\model \setminus U_\model,U_\model,\semSenseSing_\model)  \)}
    \State \Return \(\model\)
    \EndFunction
  \end{algorithmic}
\caption{Non-Destructive Reduction}
\label{alg:reduce-non-dest}
\end{figure}

The corresponding
proof outline is presented in Fig.~\ref{prf:reduce-non-dest}. The loop invariant is the formula \(\lcovered\), i.e.,
at each part of the loop, all zones in the model are covered by at least one sensor not yet marked as unnecessary. 
Most of the proof steps are direct application
of the Hoare proof rules. However, there are three steps that are not immediate,
First, we have to prove that the input of the algorithm, a model
without any marked sensors, satisfies the loop invariant. This is shown in Lemma~\ref{lem:no-unnec-coverage}.
\begin{lemma}
\label{lem:no-unnec-coverage}
Let \(\model\) be a non-empty model with \(\necessarySem_\model = \unnecessarySem_\model = \emptyset\). Then
\(\model \models \lcovered\).
\end{lemma}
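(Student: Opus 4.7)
The plan is to unfold $\lcovered$ and exploit the structural constraints on static topology models given in Definition~\ref{def:static-topology}. Recall that $\lcovered \equiv \forall z \exists s \colon \lnot \unnecessary{s} \land \element{z}{\sense{s}}$, so I need to show that for an arbitrary valuation $\val$ and every $z \in Z_\model$, there is some $\sensor{s} \in S_\model$ satisfying both conjuncts.

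First I would dispatch the $\lnot \unnecessary{s}$ conjunct by the hypothesis: since $\unnecessarySem_\model = \emptyset$, no semantic sensor lies in $\unnecessarySem_\model$, and hence $\model, \val' \models \lnot \unnecessary{s}$ holds for every valuation $\val'$ regardless of the witness we pick. This reduces the goal to: for every $z \in Z_\model$ there exists $\sensor{s} \in S_\model$ with $z \in \semSense{\sensor{s}}$.

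Next, I would pick an arbitrary $z \in Z_\model$. By Definition~\ref{def:static-topology}, $Z_\model \subseteq \powset{S_\model}$ and $\emptyset \notin Z_\model$, so $z$ is a non-empty subset of $S_\model$. Hence there exists some $\sensor{s} \in z$. The definition of $\semSenseSing_\model$, namely $\semSense{\sensor{s}} = \{z' \mid \sensor{s} \in z' \land z' \in Z_\model\}$, then yields $z \in \semSense{\sensor{s}}$ directly. Choosing $\val\subst{s}{\sensor{s}}$ as the witness for the inner existential then satisfies both conjuncts, and universal generalisation over $z$ and over $\val$ completes the argument.

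There is no real obstacle here: the proof is a direct unwinding of definitions. The only point requiring a moment of care is distinguishing between the syntactic predicate $\unnecessarySymb$ and its semantic interpretation $\unnecessarySem_\model$, and noting that $z$ is guaranteed non-empty precisely because Definition~\ref{def:static-topology} excludes the empty set from $Z_\model$ — without that clause the claim would fail, since an empty zone could not be witnessed by any sensor.
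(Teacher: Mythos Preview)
Your proposal is correct and follows essentially the same approach as the paper's proof: use $\unnecessarySem_\model = \emptyset$ to discharge the $\lnot\unnecessary{s}$ conjunct, then appeal to the structural constraints on topology models to find a covering sensor for each zone. The only difference is that you spell out explicitly how $\emptyset \notin Z_\model$ and the definition of $\semSenseSing$ combine to guarantee a witness, whereas the paper compresses this into the single phrase ``due to our general condition on topology models, each zone is covered by at least one sensor.''
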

\begin{proof}
Since \(\unnecessarySem_\model = \emptyset\), no sensor is marked as unnecessary. Furthermore, due to our
general condition on topology models, each zone is covered by at least one sensor. All in all, each zone
is covered by at least one sensor, which is not marked as unnecessary.
\end{proof}

\begin{figure}
  \begin{algorithmic}[1]
    \assert[Pre:]{\model \neq \emptyset \land \necessarySem_\model = \unnecessarySem_\model= \emptyset}
    \Function{reduce}{\model}
    \assert{\model \models \lcovered\ (\text{Lemma}~\ref{lem:no-unnec-coverage})}
    \While{\(\model \models \exists s \colon \lnot \unnecessary{s} \land  \lnred{s}\)}
    \assert{ \model \models \lcovered \text{ and } \model \models \exists s \colon \lnot \unnecessary{s} \land \lnred{s}} \label{lst:inv}
    \assert{(S_\model, Z_\model,N_\model,U_\model  \cup \{\chs{\{\sensor{s}\! \mid\! \model \models \lnot\unnecessary{s}\land \lnred{s}\}},\semSenseSing_\model) \models \lcovered\}\ (\text{Lem.}~\ref{lem:rem-pos-redundant})} \label{lst:choose}
    \State{\( s \gets \chs{\{\sensor{s} \mid \model \models \lnot\unnecessary{s}\land \lnred{s}\}}\)}
    \assert{(S_\model, Z_\model,N_\model,U_\model \cup \{s\},\semSenseSing_\model) \models \lcovered }
    \State{\(\model \gets (S_\model, Z_\model,N_\model,U_\model \cup \{s\},\semSenseSing_\model)  \)}
    \assert{\model \models \lcovered}
    \EndWhile
    \assert{  \model \models \lcovered \text{ and } \model \models \forall s \colon \unnecessary{s} \lor  \lnot \lnred{s}}\label{lst:after-loop}
    \assert{(S_\model, Z_\model,S_\model \setminus U_\model,U_\model,\semSenseSing_\model) \text{ is irreducible. } (\text{Lemma}~\ref{lem:irreducible-const})} \label{lst:irred}
    \State{\(\model \gets (S_\model, Z_\model,S_\model \setminus U_\model,U_\model,\semSenseSing_\model)  \)}
    \assert{\model \text{ is irreducible.}}
    \State \Return \(\model\)
    \EndFunction
  \end{algorithmic}
\caption{Proof Outline for Alg.~\ref{alg:reduce-non-dest}.}
\label{prf:reduce-non-dest}
\end{figure}

The next lemma is used to prove the step from line~\ref{lst:inv} to line~\ref{lst:choose}. That is, if
the loop condition and the invariant hold for a model, then we can choose a possibly redundant sensor \(s\) that
has not yet been marked, mark it as unnecessary, and the resulting model still satisfies the invariant \(\lcovered\).  
\begin{lemma}
\label{lem:rem-pos-redundant}
Let \(\model\) be a model with \(\model \models \forall z \exists s \colon \lnot\unnecessary{s} \land \element{z}{\sense{s}}\) and \(s\) be a sensor constant such that
\(\model \models \lnot \unnecessary{s} \land \lnred{s}\). Furthermore, let \(\model^\prime =  (S_\model, Z_\model,N_\model,U_\model \cup \{\sensor{s}\},\semSenseSing_\model)\).
Then \(\model^\prime \models   \forall z \exists s \colon \lnot\unnecessary{s} \land \element{z}{\sense{s}}\).
\end{lemma}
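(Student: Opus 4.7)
The plan is to pick an arbitrary zone $z \in Z_{\model'} = Z_\model$ and produce a witness sensor $y$ with $\lnot\unnecessary{y}$ and $\element{z}{\sense{y}}$ in $\model'$, by a short case analysis on whether the witness supplied by the hypothesis is the newly marked sensor~$s$ or not. Observe first that $\model$ and $\model'$ coincide on $S_\model$, $Z_\model$ and $\semSenseSing_\model$, so the predicate $\element{z}{\sense{y}}$ has the same truth value in both structures for every assignment; the only semantic change is that $\unnecessarySem_{\model'} = \unnecessarySem_\model \cup \{\sensor{s}\}$.

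Let $z \in Z_\model$ be arbitrary. By the coverage hypothesis on $\model$, there is some sensor $\sensor{t}$ with $\sensor{t} \notin \unnecessarySem_\model$ and $z \in \semSense{\sensor{t}}$. I would split on whether $\sensor{t} = \sensor{s}$.

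\textbf{Case 1: $\sensor{t} \neq \sensor{s}$.} Then $\sensor{t} \notin \unnecessarySem_\model \cup \{\sensor{s}\} = \unnecessarySem_{\model'}$, so $\model' \models \lnot\unnecessary{t}$, and $z \in \semSense{\sensor{t}}$ is preserved. Hence $\sensor{t}$ is the required witness.

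\textbf{Case 2: $\sensor{t} = \sensor{s}$.} Then $z \in \semSense{\sensor{s}}$, i.e.\ $\model \models \element{z}{\sense{s}}$ under the valuation sending the relevant variable to $z$. Since $\model \models \lnred{s}$, instantiating the universal quantifier in $\lnred{s}$ with $z$ yields a sensor $\sensor{y}$ with $\sensor{y} \neq \sensor{s}$, $\sensor{y} \notin \unnecessarySem_\model$, and $z \in \semSense{\sensor{y}}$. Because $\sensor{y} \neq \sensor{s}$, we still have $\sensor{y} \notin \unnecessarySem_{\model'}$, and $z \in \semSense{\sensor{y}}$ is unchanged. So $\sensor{y}$ is the required witness.

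Since $z$ was arbitrary, $\model' \models \forall z\, \exists s\colon \lnot\unnecessary{s} \land \element{z}{\sense{s}}$, completing the argument. There is no real obstacle here; the only point that needs a little care is recognising that $\lnred{s}$ explicitly forces the covering sensor $\sensor{y}$ to be distinct from $\sensor{s}$ and not already marked, which is exactly what is needed to survive adding $\sensor{s}$ to $\unnecessarySem$.
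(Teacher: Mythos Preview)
Your proof is correct and follows essentially the same line as the paper's own argument. The only cosmetic difference is the case split: the paper splits on whether $z \in \semSense{\sensor{s}}$, whereas you split on whether the coverage witness $\sensor{t}$ equals $\sensor{s}$; these are equivalent, since a witness $\sensor{t} \neq \sensor{s}$ already survives the update, and if the witness is $\sensor{s}$ then $z \in \semSense{\sensor{s}}$ and $\lnred{s}$ supplies the replacement.
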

\begin{proof}
Observe that the zones and their containment relations in the ranges does not change between \(\model\) and \(\model^\prime\).
For all \(z\) with \(\model^\prime \models \lnot \element{z}{\sense{s}}\), we have the result immediately, by the condition on \(\model\). 
So let \(z\) be a zone with \(\model^\prime \models \element{z}{\sense{s}}\). By \(\model \models \lnred{s}\), we know that there 
is a sensor \(t\) different from \(s\) with \(\model \models \lnot \unnecessary{t} \land \element{z}{\sense{t}}\). Since in 
\(\model^\prime\) we still have \(\model^\prime \models \lnot \unnecessary{t}\), we get \(\model^\prime \models \exists t \colon s \neq t \land \lnot \unnecessary{t} \land \element{z}{\sense{s}}\).
Since \(z\) was arbitrary, the lemma follows.
\end{proof}

The last lemma of this section refers to the final steps in the algorithm, i.e., the relation
between line~\ref{lst:after-loop} and \ref{lst:irred}. It shows that if we have a model that
satisfies the loop invariant, and where all sensors are either marked as unnecessary or
are not possibly redundant, then the model created by marking all remaining sensors
as necessary is irreducible. That is, this lemma states that the model
returned by the algorithm is always irreducible.
\begin{lemma}
\label{lem:irreducible-const}
   Let \(\model\) be model such that \(\model \models \lcovered\) and \(\model \models \forall s \colon \unnecessary{s} \lor  \lnot \lnred{s}\).
Furthermore, let \(\model^\prime =   (S_\model, Z_\model,S_\model \setminus U_\model,U_\model ,\semSenseSing_\model)\).
Then \(\model^\prime\) is irreducible, i.e., 
\begin{align}
\model^\prime \models &  \forall x \exists y \colon \element{x}{\sense{y}} \land \necessary{y} \label{eq:nec-coverage}\\
 \model^\prime \models& \forall s \colon \necessary{s} \lor \unnecessary{s} \label{eq:nec-unnec}\\
\model^\prime \models& \forall s \colon \lnred{s} \implies \unnecessary{s}  \label{eq:unnec-posred}.
\end{align}
\end{lemma}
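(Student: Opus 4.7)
The plan is to exploit the fact that $\model$ and $\model'$ differ only in the interpretation of $\necessarySymb$: by construction $Z_{\model'} = Z_\model$, $\semSenseSing_{\model'} = \semSenseSing_\model$, and crucially $U_{\model'} = U_\model$, while $N_{\model'} = S_\model \setminus U_\model$. Consequently, any subformula that does not mention $\necessarySymb$ (in particular $\element{\cdot}{\sense{\cdot}}$, equality, and $\unnecessary{\cdot}$, and therefore $\lnred{\cdot}$ and $\lcovered$) has the same truth value in $\model$ and in $\model'$. I would state this transfer observation first and use it throughout.

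With this in hand, \eqref{eq:nec-unnec} is immediate: by construction $N_{\model'} \cup U_{\model'} = (S_\model \setminus U_\model) \cup U_\model = S_\model$, so every sensor is either necessary or unnecessary in $\model'$. For \eqref{eq:nec-coverage}, I would take an arbitrary zone $z$ and invoke the hypothesis $\model \models \lcovered$ to obtain a sensor $\sensor{s}$ with $\lnot \unnecessary{s}$ and $\element{z}{\sense{s}}$ in $\model$. Since $\sensor{s} \notin U_\model$, we have $\sensor{s} \in S_\model \setminus U_\model = N_{\model'}$, hence $\model' \models \necessary{s}$; and because $\element{z}{\sense{s}}$ does not mention $\necessarySymb$, it transfers to $\model'$.

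For \eqref{eq:unnec-posred}, I would pick an arbitrary sensor $\sensor{s}$ and split on the hypothesis $\model \models \unnecessary{s} \lor \lnot \lnred{s}$. If $\model \models \unnecessary{s}$, then $\sensor{s} \in U_\model = U_{\model'}$, so $\model' \models \unnecessary{s}$ and the implication holds trivially. If $\model \models \lnot \lnred{s}$, then by the transfer observation (since $\lnred{s}$ refers only to $\senseSing$ and $\unnecessarySymb$, both unchanged) we get $\model' \models \lnot \lnred{s}$, and again the implication holds vacuously.

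The only subtle point, and the step I would be most careful about, is precisely the transfer of $\lnred{s}$: one might worry that making more sensors \emph{necessary} changes redundancy, but $\lnred{s}$ is defined via $\lnot \unnecessary{y}$ rather than via $\necessary{y}$, and $U$ is the same in both models, so the semantics is unaffected. Once this is spelt out the three clauses follow directly, completing the proof.
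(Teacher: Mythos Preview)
Your proof is correct and follows essentially the same route as the paper's own argument: property~\eqref{eq:nec-unnec} by construction of $N_{\model'}$, property~\eqref{eq:unnec-posred} from the hypothesis on $\model$ together with $U_{\model'}=U_\model$, and property~\eqref{eq:nec-coverage} from $\model\models\lcovered$ plus the observation that $\lnot\unnecessary{s}$ in $\model$ implies $\necessary{s}$ in $\model'$. Your explicit ``transfer observation'' (that formulas not mentioning $\necessarySymb$ are preserved because only $N$ changes) is exactly what the paper uses implicitly when it says ``$\model'$ contains the same set of unnecessary sensors as $\model$''; you simply spell it out in more detail.
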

\begin{proof}
Property~(\ref{eq:nec-unnec}) holds by construction of \(\model^\prime\). Furthermore, we get property~(\ref{eq:unnec-posred}) by
the assumption on \(\model\), and the fact that \(\model^\prime\) contains the same set of unnecessary sensors as \(\model\).
Finally, property~(\ref{eq:nec-coverage}) holds, since \(\model \models \lcovered\), and for all sensors \(s\) with \(\model \models \lnot \unnecessary{s}\),
we have \(\model^\prime \models \necessary{s}\).  
\end{proof}

For the counterpart to the algorithm shown in Fig.~\ref{alg:non-redundant-maximum-overlaps}, we can amend the formulas to only use sensors deemed necessary. The algorithm itself
does not need to be changed.
\begin{align*}
  O^m &\equiv \exists x,  y_1, \dots, y_m \colon \bigwedge_{i=1}^m \element{x}{\sense{y_i}} \land \bigwedge_{i \neq j} y_i \neq y_j \land \bigwedge_{i=1}^m \necessary{y_i}
\end{align*}



\section{Dynamical topology models}
\label{sec:dynamic-topology}
The models defined in Sect.~\ref{sec:static} define  static topologies of sensor nodes. However,
in reality,   
 topological  relations of the sensor ranges may change over time, for example 
 due to environmental reasons like rain or fog, or moving obstacles blocking the sensors, or
because the sensors are 
 mobile, e.g., attached to drones or autonomous robots.
 This may result in previously unnecessary sensors becoming necessary  and vice versa. 
In this section, we present how we can extend the formal models to reason about temporal changes in
the topologies, by temporalising our static logic with respect to \emph{full computation tree logic} (\(\text{CTL}^\ast\)) \cite{Emerson1986}. 
Since our models are qualitative in nature, we only capture purely qualitative changes
within the topology. Egenhofer and Al-Taha~\cite{Egenhofer1992} identified the possible changes in qualitative relationships for
spatial regions, i.e., regions that are non-empty, convex connected subsets of \(\mathbb{R}^2\). This matches our view
of the sensor ranges. The full set of relations between regions is shown in Tab.~\ref{tab:spat-relations}.
 
\begin{table}[h]
  \centering
  \begin{tabular}{|m{2cm}|m{2cm}||m{2cm}|m{2cm}|}
\hline
   disjoint &
    \begin{tikzpicture}
      \path (0,.5) -- (0,-.5);
      \draw[pattern=north west lines] (.5,0) circle (.25cm);
      \draw[pattern=north east lines] (1.25, 0) circle (.25cm);
      \node at (.5, .4) (A) {\(A\)};
      \node at (1.25, .4) (B) {\(B\)};

    \end{tikzpicture}
&
   contains &
    \begin{tikzpicture}
      \path (0,.5) -- (0,-.5);
      \draw[pattern=north west lines] (1,0) circle (.4cm);
      \draw[pattern=north east lines] (1, 0) circle (.25cm);
    \end{tikzpicture}\\ 
\hline 
   inside & 
    \begin{tikzpicture}
      \path (0,.5) -- (0,-.5);
      \draw[pattern=north west lines] (1,0) circle (.25cm);
      \draw[pattern=north east lines] (1, 0) circle (.4cm);
    \end{tikzpicture}&
   equal &
           \begin{tikzpicture}
      \path (0,.5) -- (0,-.5);
             \draw[pattern=north west lines] (1,0) circle (.25cm);
             \draw[pattern=north east lines] (1, 0) circle (.25cm);
           \end{tikzpicture} \\
\hline 
   meet & 
           \begin{tikzpicture}
      \path (0,.5) -- (0,-.5);
             \draw[pattern=north west lines] (.75,0) circle (.25cm);
             \draw[pattern=north east lines] (1.25, 0) circle (.25cm);
           \end{tikzpicture} &
   covers & 
           \begin{tikzpicture}
      \path (0,.5) -- (0,-.5);
             \draw[pattern=north west lines] (1,0) circle (.35cm);
             \draw[pattern=north east lines] (1.105, 0) circle (.25cm);
           \end{tikzpicture} \\
\hline 
   coveredBy & 
           \begin{tikzpicture}
      \path (0,.5) -- (0,-.5);
             \draw[pattern=north west lines] (1.105,0) circle (.25cm);
             \draw[pattern=north east lines] (1, 0) circle (.35cm);
           \end{tikzpicture} &
   overlap & 
           \begin{tikzpicture}
      \path (0,.5) -- (0,-.5);
             \draw[pattern=north west lines] (.875,0) circle (.25cm);
             \draw[pattern=north east lines] (1.125, 0) circle (.25cm);
           \end{tikzpicture} \\
\hline
  \end{tabular}
  \caption{The eight possible relations between spatial regions \(A\) and \(B\) \cite{Egenhofer1992}}
  \label{tab:spat-relations}
\end{table}

However, in contrast to their work, we have no means of identifying the boundary of a spatial region. 
Furthermore, we assume that targets are only countable, if they are fully within the sensor range,
and not only on the boundary. As a consequence, the differences between several of these relations are
not of interest to us. For example, the difference between \emph{meet} and \emph{disjoint} is only whether 
the boundaries intersect. Similarly, we cannot distinguish between \emph{covers} and \emph{contains}, as well
as \emph{coveredBy} and \emph{inside}. From each of  these three pairs, we keep the less specific relation. Hence
we end up with the five relations: \emph{disjoint}, \emph{equal}, \emph{overlap}, \emph{inside}, and \emph{contains}.  
We follow Egenhofer and Al-Taha's approach of identifying the pairs of relations with the least topological distance,
while ignoring
the relations concerning the boundaries. 
This results 
in the graph of possible qualitative changes for our model as shown in Fig.~\ref{fig:qualitative-changes-relations}. 
If two regions move relative to each other, the relation between them can only change as indicated by
the graph. For example, if two initially disjoint regions move, the relation between them cannot
immediately change such that one is a subregion of the other, but they have to partially 
overlap first.   

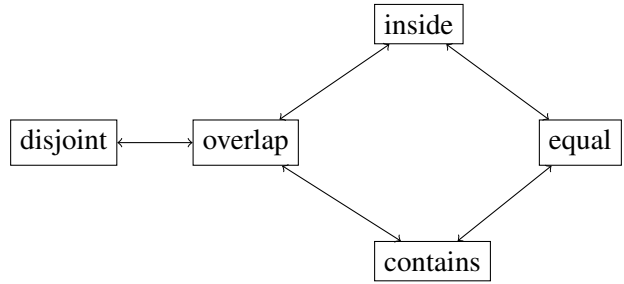
\begin{figure}[h]
  \centering
  \begin{tikzpicture}
    \node[draw] (disj) {disjoint};
    \node[draw, right=of disj] (overlap) {overlap};
    \node[draw, below right  = of overlap] (contains) {contains};
    \node[draw, above right  = of overlap] (inside) {inside};
    \node[draw, below right  = of inside] (equals) {equal};

    \draw[<->] (disj) to (overlap);
    \draw[<->] (overlap) to (contains);
    \draw[<->] (overlap) to (inside);

    \draw[<->] (equals) to (contains);
    \draw[<->] (equals) to (inside);

  \end{tikzpicture}
  \caption{Possible qualitative changes of two sensor ranges}
  \label{fig:qualitative-changes-relations}
\end{figure}

\subsection{Syntax and semantics}
\label{sec:dyn-syntax-semantics}
We extend the notion of a model along
the lines of Finger and Gabbay \cite{Finger1992}. That is, we 
use a standard possible-world semantics, but associate a static model as defined in the previous section 
to each world. The worlds are connected by directed transitions, to form a model of \emph{branching
time}. Hence, at each world, several futures are possible. While our definition of models 
does not respect the relational changes identified in in Fig.~\ref{fig:qualitative-changes-relations}, we will  
 restrict
the connections between worlds to adhere to the qualitative changes given by requiring certain
axioms to hold.

We only define the dynamic models with respect to the topologies of the sensors. That is,
we  associate a point in time with a class of models, which all possess the same topology,
but  where different sensors are marked as necessary or unnecessary.
Formally, we say that two models \(\model_1= (S_1,Z_1, N_1,U_1, \semSenseSing_1)\) and \(\model_2 =(S_2,Z_2, N_2,U_2, \semSenseSing_2)\) 
are \emph{topologically equivalent}, written
\(\model_1 \sim \model_2\), if and only if \(S_1 = S_2\), \(Z_1 = Z_2\), and \(\forall \sensor{s} \colon \semSenseSing_1(\sensor{s}) = \semSenseSing_2(\sensor{s})\).
We denote the equivalence class of a model \(\model\) by \([\model]\) and the set of these equivalence classes by \(\statmodels_{/\sim}\).

\begin{definition}[Dynamic Topology Model]
\label{def:dynmodel}
 A \emph{dynamic topology model} is a structure \(\dynmodel = (W, \trans, f)\), where \(W\) is a non-empty
set of \emph{worlds}, \(\trans \subseteq W \times W\) is the \emph{accessibility relation} and
\(f \colon W \to \statmodels_{/\sim}\) is a function mapping each world to an equivalence class of static topology models. Furthermore,
we require that the  members of the equivalence classes in the range of \(f\) are based on the same domain  of
sensors. 
\end{definition}

\begin{definition}[Path]
Let \(\dynmodel\) be a model according to Def.~\ref{def:dynmodel}. A \emph{path} \(\pi\) through \(\dynmodel\) is 
an infinite sequence of worlds \(w_0,w_1, \dots\), such that \(w_i \in \dynmodel\)  and
\(w_i \trans w_{i+1}\)  for all \( i \in \N\). Given a path \(\pi = w_0,w_1, \dots \), we will also denote
\(w_i\) by \(\pi^i\). We write the \emph{set of all paths in \dynmodel } as \(\paths{\dynmodel}{}\).  
Furthermore, the set of all paths in \(\dynmodel\) starting at \(w\) is denoted by
\(\paths{\dynmodel}{w} = \{\pi \mid \pi \in \paths{\dynmodel}{} \land \pi^0 = w\}\). 
\end{definition}

The syntax of \emph{Dynamic STL} is based on \(\text{CTL}^\ast\), but instead of propositional atoms,  we 
use static topology formulas as defined in the previous section. 

\begin{definition}[Syntax of Dynamic STL]
\label{def:dynamic-syntax}
The syntax of Dynamic STL is given by the following EBNF.
  \begin{align*}
    \chi & ::= \varphi_s \mid \lnot \chi \mid \chi \land \chi \mid \allpaths \psi\\
    \psi & ::= \chi \mid \lnot \psi \mid \psi \land \psi \mid \lnext \chi \mid \chi \until \chi 
  \end{align*}
where \(\varphi_s\) is a static topology formula, 
which does not use necessity  predicates. We call formulas of the form \(\chi\) \emph{state formulas} and
formulas of type \(\psi\) \emph{path formulas}. 
\end{definition}
The other operators, i.e., the Boolean operators like disjunction, implication, the temporal operators \(\lglobally\) and \(\lfinally\)
as well as the remaining path operator \(\somepath\) can be defined as abbreviations as usual \cite{Emerson1986}.
Observe that the semantics of purely topological formulas  is equivalent for all members of a topological equivalence
class. 
Hence we silently lift satisfaction of these formulas to topological equivalence classes 

\begin{definition}[Semantics]
\label{def:dynamic-semantics}
Let \(\dynmodel\) be a dynamic topology model and \(w \in \dynmodel\). The semantics of a formula 
with respect to \(\dynmodel\) and a valuation \(\val\) is given by the following inductive definition.
The Boolean operators are defined as usual, and hence we omit them. 
\begin{align*}
\dynmodel, w, \val & \models \varphi_s &\iff& f(w), \val \models \varphi_s \\
\dynmodel, w, \val & \models \allpaths \psi  &\iff& \forall \pi \in  \paths{\dynmodel}{w} \colon \dynmodel, \pi, \val \models \psi \\
\dynmodel, \pi,  \val & \models \lnext \chi  &\iff& \dynmodel, \pi^1,\val \models \chi\\
\dynmodel, \pi, \val & \models \chi_1 \until \chi_2 & \iff & \exists k \in \N \colon \dynmodel, \pi^k, \val \models \chi_2 \text{ and } \forall i < k \colon \dynmodel, \pi^i, \val \models \chi_1   
\end{align*}
\end{definition}

\subsection{Axiomatising semantical conditions}
\label{sec:dyn-axiomatising}
Recall the abbreviations from Sect.~\ref{sec:syntax-semantics} for expressing proper overlaps and subsets.
With these abbreviations at hand, we can represent the changes allowed by Fig.~\ref{fig:qualitative-changes-relations}
 by suitable axioms. Since the conditions
only concern single transitions, the axioms are of the form \(\varphi_1 \implies \allpaths \lnext \varphi_2\). That is, they describe how a given
state (identified by \(\varphi_1\)) may evolve during one transition. The left-hand side of each axiom directly
corresponds to a state in Fig.~\ref{fig:qualitative-changes-relations}. The right-hand sides reflect the outgoing 
edges of each state as succinct as possible. For example, in Axiom~(\ref{ax:disjoint}), the right-hand side ensures, that
neither range is a subset of the other. This condition is satisfied, either if the ranges are disjoint or if they partially overlap, but in
no other case.

\begin{align}
 \ldisjoint{s}{s^\prime} & \implies \allpaths \lnext  (\lnot \sense{s} \leq \sense{s^\prime} \land \lnot \sense{s^\prime} \leq \sense{s} ) \label{ax:disjoint}  \\
\loverlap{s}{s^\prime} & \implies \allpaths \lnext ( \lnot \sense{s} \leq  \sense{s^\prime} \lor \lnot \sense{s^\prime} \leq \sense{s} ) \label{ax:overlap} \\
\lpsubset{s}{s^\prime} & \implies \allpaths \lnext \left( \sense{s} \leq \sense{s^\prime} \lor \loverlap{s}{s^\prime}
 \right)\label{ax:subset}\\
\sense{s} = \sense{s^\prime} & \implies \allpaths \lnext ( \sense{s} \leq \sense{s^\prime} \lor \sense{s^\prime} \leq \sense{s}) \label{ax:equals}
\end{align}

The four axioms model the possible qualitative topological changes. E.g., axioms  \ref{ax:disjoint} denotes that
two disjoint sensor ranges can either stay disjoint at the next step, or may overlap without one range covering the other. Observe
that axiom \ref{ax:subset} handles both cases, where one range is covered by the other (i.e., the \emph{inside} and \emph{contains} relations
of Fig.~\ref{fig:qualitative-changes-relations}), since these relations are inverse to each other.

\section{Conclusion}
\label{sec:conc}
We showed how the topological aspects of the SCAN target counting algorithm in
two-dimensions can be formalised. To that end, we defined a notion of sensor
models and a suitable first-order logic, dedicated to reason about specific parts
of the topology, in particular, the intersections of sensor ranges. This allowed
us to uncover non-determinism in the algorithm and specify how this may affect
the end results. Namely, we showed that the notion of irreducible topologies is
not unique and how this can have a negative impact on the precision of the
bounds computed by the algorithm. We then extended our model to take dynamic
topologies into account: first, we identified with a lattice in which ways a
topology may change when a sensor moves; then, we amended the
formalisation of the algorithm to accommodate these changes.


We believe that the logical formalisation presented in this paper will serve as a
valuable tool for the rigorous study of the topology-dependent algorithms we
often encounter in
wireless sensor networks, ubiquitous systems, IoT, and mixed reality systems.

In future work, 
we intend to analyse the SCAN algorithm's numerical estimates and 
possible improvements. To that end, we will extend STL by associating
each sensor range with its reading, i.e., the number of targets
on its range.
For example, we could model sensor readings with intervals \([a,b]\) to
represent the upper and lower bounds of each reading. Then, instead of being
discarded, redundant sensor readings could be used to refine the values of the
non-redundant sensors used in the computations, potentially tightening the
bounds of the SCAN algorithm's estimates. Another issue worth investigating is
to analyse how miscounts in the individual readings perturb the final estimates.

A second strand of research will involve comparing different estimation schemes
and establishing links with other formalisations of the target counting problem
such as the approach based on Euler characteristic over simplicial complexes
introduced by Baryshnikov and
Ghrist~\cite{Baryshnikov2009}. Another formalism we will
consider is bigraphs with sharing~\cite{TCS2013}, a computational model with
strong emphasis on spatial properties of systems that has been used by one of
the authors to formalise topological aspects of wireless networks~\cite{SC2012}.


\bibliographystyle{eptcs}
\bibliography{biblio}
\end{document}